\let\proof\@undefined                        
\let\endproof\@undefined                  
\newcolumntype{L}[1]{>{\raggedright\let\newline\\\arraybackslash\hspace{0pt}}m{#1}}
\newcolumntype{C}[1]{>{\centering\let\newline\\\arraybackslash\hspace{0pt}}m{#1}}
\newcolumntype{R}[1]{>{\raggedleft\let\newline\\\arraybackslash\hspace{0pt}}m{#1}}
\newtheorem{theorem}{Theorem}
\newtheorem{definition}{Definition}
\newtheorem{remark}{Remark}
\newtheorem{proposition}{Proposition}
\newtheorem{problem}{Problem}
\DeclareMathOperator*{\argmin}{argmin}
\DeclareMathOperator{\sgn}{sgn}
\title{\LARGE \bf
Tractable Compositions of Discrete-Time Control Barrier Functions 
with Application to Lane Keeping and Obstacle Avoidance}
\author{Matthew Cavorsi\;\;\; Mohammad Khajenejad\;\;\; Ruochen Niu\;\;\; Qiang Shen\;\;\; Sze Zheng Yong
\thanks{M. Cavorsi, M. Khajenejad, R. Niu and S.Z. Yong are with School for Engineering of Matter, Transport and Energy, 
Arizona State University, Tempe, AZ, USA; 
Q. Shen is with the School of Aeronautics and Astronautics, Shanghai Jiao Tong University, Shanghai, P.R. China
(email: {\tt \{mcavorsi,mkhajene,rniu6,szyong\}@asu.edu, qiangshen@sjtu.edu.cn}). }
\thanks{This work is partially supported by NSF grant CMMI-1925110.}
}
\begin{document}

\maketitle
\thispagestyle{empty}
\pagestyle{empty}
\begin{abstract}
This paper introduces control barrier functions for discrete-time systems, which can be shown to be necessary and sufficient for controlled invariance of a given set. Moreover, we propose nonlinear discrete-time control barrier functions for partially control affine systems that lead to controlled invariance conditions that are affine in the control input, leading to a tractable formulation that enables us to handle the safety optimal control problem for a broader range of applications with more complicated safety conditions than existing approaches. In addition, we develop mixed-integer formulations for basic and secondary Boolean compositions of multiple control barrier functions and further provide mixed-integer constraints for piecewise control barrier functions. Finally, we apply these discrete-time control barrier function tools to automotive safety problems of lane keeping and obstacle avoidance, which are shown to be effective in simulation.
\end{abstract}

\vspace{0.05cm}
\section{Introduction} \label{sec:intro}
\emph{Motivation.} 
Motivated by safety-critical applications such as adaptive cruise control systems \cite{ames2016control}, multi-agent systems \cite{wu2016safety} and footstep placement of bipedal robots \cite{nguyen2016dynamic,nguyen2016optimal}, 
several safety control approaches have been developed  to guarantee safety, in addition to addressing the stabilization problem. In these applications, it is of great interest and importance to ensure that the control algorithms are tractable and can be implemented at run-time. 

\emph{Literature review.} A variety of Lyapunov-like approaches have been developed to construct \emph{barrier certificates} and (controlled) invariant sets for ensuring system safety, 
both for autonomous systems, e.g., in \cite{prajna2005necessity,prajna2007framework,tee2009barrier} and for control systems, e.g., \cite{wieland2007constructive,aubin2009viability,aubin1990survey,aubin2011viability}.
Moreover, these Control
Barrier Functions (CBFs) can be combined with control Lyapunov functions, yielding Control Lyapunov Barrier Functions (CLBF), which 
have been shown in recent years to be a promising approach for jointly guaranteeing 
safety and stability.

Although CBFs and CLBFs have been extensively studied in the control and verification literature for a broad range of continuous-time systems 
\cite{ames2016control,romdlony2016stabilization,glotfelter2019hybrid,santoyo2019barrier,glotfelter2017nonsmooth,borrmann2015control,nguyen2016exponential,lindemann2018control,lindemann2019decentralized,djaballah2017construction,glotfelter2018boolean,wieland2007constructive} for applications such as model predictive control, obstacle/collision avoidance, eventuality properties  
or safety establishment and multiobjective control, there are only relatively few studies that address the design of CBF-based approaches for discrete-time dynamical systems. The work in \cite{agrawal2017discrete} extends the continuous-time CBF-based developed tools for safety-critical applications to discrete-time systems, and established that the extension is not straightforward because the resulting optimization problem is not necessarily convex and hence \emph{tractability} remains an unsolved issue, except for some special cases such as linear/linearized settings.

On the other hand, the authors in \cite{wills2004barrier} developed a barrier function based model predictive control for a class of nonlinear discrete-time dynamics, which 
hinges extensively on the stabilizability of the linearized system, while 
\cite{ahmadi2019safe} applied discrete-time barrier functions 
to derive necessary and sufficient conditions that ensure safety of a given set, for multi-agent partially observable Markov decision processes and further 
proposed conditions for checking Boolean compositions of barrier functions to represent more complicated safety sets. However, 
the assumption of finite and countable actions (i.e., control inputs) as well as the Markov assumption are essential for obtaining a tractable solution in \cite{ahmadi2019safe}. 

\emph{Contribution.}
In this paper, we present an arguably more straightforward formulation of control barrier functions for discrete-time systems than the formulations in \cite{agrawal2017discrete,ahmadi2019safe}, and show that it is the least restrictive in terms of the set of allowable safe inputs, which in turn guarantees optimality when combined with an optimal controller.  
Further, we propose a more general class of nonlinear control barrier functions for partially control affine systems that lead to invariance conditions that are affine in the control input and hence, resulting in tractable optimization problems. This enables us to handle the safety optimal control problem for a broader range of applications than the case with linear systems and linear CBFs considered in \cite{agrawal2017discrete}.


Moreover, we derive mixed-integer formulations for basic \emph{Boolean compositions} of multiple CBFs, as an alternative to \cite{ahmadi2019safe,glotfelter2018boolean}, and further provide mixed-integer constraints corresponding to secondary Boolean compositions (i.e., \emph{implies}, \emph{exclusive or} and \emph{equivalence}), as well as \emph{if-then-else} statements. These compositions, when combined with the tractable nonlinear CBFs, enable us to guarantee safety for more complicated non-convex or piecewise safe sets and for more general switched systems with non-smooth dynamics using tractable mixed-integer linear programs.

Then, equipped by these 
discrete-time CBF tools, we consider the discrete-time lane keeping problem for autonomous driving that was previously only achieved using a continuous-time formulation \cite{ames2016control}. In addition, we extend this to the obstacle avoidance problem where a vehicle can avoid an obstacle by choosing to either go around its left or right.


\section{Preliminaries and Problem Formulation} \label{sec:prelim2}

\subsection{Notations and Definitions}
$\mathbb{R}^n$ denotes the set of n-dimensional real numbers,  $\mathbb{N}$ is the set of natural numbers, and the set of positive integers up to $n$ is denoted by $\mathbb{Z}_n^{+}$. 
$\mathbf{0}_{m\times n}$ represents the matrix of zeros of appropriate dimensions. We will also make use of  the following definitions. 
\begin{definition}[SOS-1 Constraint \cite{gurobi}] \label{def:SOS1}
	A special ordered set of degree 1 (SOS-1) constraint\footnote{Off-the-shelf solvers such as Gurobi and CPLEX \cite{gurobi,cplex} can readily handle these constraints, which can significantly reduce the search space for integer variables in branch and bound algorithms.} is a set of integer, continuous or mixed-integer scalar variables for which at most one variable in the set may take a value other than zero, denoted as SOS-1: $\{v_1,\hdots,v_N\}$. For instance, if $v_i \neq 0$, then this constraint imposes that $v_j=0$ for all $j \neq i$.$v_1=\hdots=v_{i-1}=v_{i+1}=\hdots=v_N=0$.
\end{definition}

\begin{definition}[Partition] \label{def:partition}
A partition of a set/domain $\mathcal{P}$ is a collection of $| \mathcal{J}|$ disjoint subsets $\mathcal{P}_{j}$ 
such that $\bigcup \limits_{ j\in \mathcal{J}} {\mathcal{P}}_{j} = \mathcal{P}$. 
\end{definition}



\vspace{-0.5cm}
\subsection{Problem Statement} \label{sec:form2}
Consider the following class of discrete-time partially control affine systems
\begin{align} \label{eq:DT_sys}
\begin{array}{rl}
x_{k+1}\triangleq \begin{bmatrix}x_{1,k+1} \\ x_{2,k+1} \end{bmatrix}&=\begin{bmatrix}f_1(x_k) \\ f_2(x_k) \end{bmatrix}+\begin{bmatrix}\mathbf{0}_{n_1 \times m} \\ g(x_k) \end{bmatrix}u_k,\\
&\triangleq f(x_k) + \tilde{g}(x_k) u_k,
\end{array}
\end{align}
where at time $k \in \mathbb{N}$, $x_k \in  \mathbb{R}^{n}$ and $u_k \in U \subseteq \mathbb{R}^m$ are the state and control input vectors, respectively. We assume that the state vector can be partitioned into two parts as $x_{k}=\begin{bmatrix}x^\top_{1,k} & x^\top_{2,k} \end{bmatrix}^\top$, where the dynamics of $x_{1} \in \mathbb{R}^{n_1}$ ($0\le n_1 \le n$) is autonomously governed by the known 
vector field $f_1(.):\mathbb{R}^n \to \mathbb{R}^{n_1}$, i.e., it is not affected by the control input signal $u_k$, while the dynamics of $x_{2} \in \mathbb{R}^{n_2}$ ($0\le n_2 \le n$, $n_1+n_2=n$) is governed by the known vector field $f_2(.):\mathbb{R}^n \to \mathbb{R}^{n_2}$ and is affinely affected by the control input $u_k$ via the dynamics of $x_{2} \in \mathbb{R}^{n_2}$ through the function $g:\mathbb{R}^n \to \mathbb{R}^{n_2 \times m}$ 
as described in \eqref{eq:DT_sys}. 

Note that the system \eqref{eq:DT_sys} is a generalization of affine control systems with an additional structure that is common for many practical systems with higher order dynamics, e.g., mechanical systems with inertia. This structure will also help us to derive nonlinear control barrier functions that lead to tractable  constraints that are affine in the control input.



Specifically, this paper seeks to address two problems:

\begin{problem} [Synthesis of Tractable Nonlinear Control Barrier Functions]
\label{prob:PACBF}
    For the discrete-time system in the form of (\ref{eq:DT_sys}), synthesize a function such that the safety set is forward controlled invariant, where the invariance condition is affine in the control input (hence, leads to tractable constraints). 
\end{problem}

\begin{problem}[Compositions of Control Barrier Functions]\label{prob:composite}
Given multiple control barrier functions, find mixed-integer encodings of their (basic and secondary) Boolean compositions. Moreover, compose piecewise functions as mixed-integer conditions.
\end{problem}
 

The motivation behind Problem \ref{prob:PACBF} is to obtain nonlinear discrete-time control barrier functions  with tractable invariance constraints (henceforth called tractable DT-CBFs)  for run-time implementation, while their compositions in Problem \ref{prob:composite} enable us to handle more complex dynamics and safety conditions, including switched dynamics and non-smooth control barrier functions. In fact, the need for the latter capability is motivated by discrete-time automotive safety applications, and in particular, for lane keeping and obstacle avoidance, which we will present in Section \ref{sec:example}.

\section{Main Results} \label{sec:method}

This section addresses Problems \ref{prob:PACBF}, and \ref{prob:composite} and in the process, develops tools that enable  optimal safety control for autonomous driving in Section \ref{sec:example}.

\subsection{Tractable Discrete-Time Control Barrier Functions}

This subsection considers the problem of synthesizing tractable discrete-time control barrier functions (DT-CBF), as stated in Problem \ref{prob:PACBF}. First, we introduce an arguably more straightforward formulation than existing formulations in the literature and then, we propose a class of nonlinear DT-CBF for partially control affine systems \eqref{eq:DT_sys} that leads to tractable constraints in optimal control problems.\\[-0.25cm]


\subsubsection{Discrete-Time Control Barrier Functions}
Consider a (safe) set $\mathcal{S}$ defined as
\begin{equation}
\begin{aligned}
    \mathcal{S} &\triangleq& \{ x \in \mathbb{R}^n : h(x) \geq 0 \}, 
\end{aligned}
\label{eq:inv}
\end{equation}
where $h:\mathbb{R}^n \to \mathbb{R}$ is any well-defined function, including discontinuous and non-smooth functions, and $\partial \mathcal{S} \triangleq {\{x\in \mathbb{R}^n : }h(x)=0 {\}}$ defines the boundary of the set. 

Next, we present the notion of (forward) controlled invariance of a set $\mathcal{S}$ and the definition of a DT-CBF, and show that the existence of the DT-CBF is  both sufficient and necessary for controlled invariance. 

\begin{definition}
A set $\mathcal{S}$ is called \emph{(forward) controlled invariant} with respect to the system dynamics \eqref{eq:DT_sys} if for every initial state $x_0 \in \mathcal{S}$, there exists a control input $u_k \in \mathbb{R}^m$ such that state trajectory remains in $\mathcal{S}$ at all times, i.e., $x_k \in \mathcal{S}$, $\forall k \in \mathbb{Z}$.
\end{definition}

\begin{definition}[Discrete-Time Control Barrier Function]\label{def:DT-CBF}
For the discrete-time system (\ref{eq:DT_sys}), the function $h:\mathbb{R}^n \to \mathbb{R}$ is a discrete-time control barrier function (DT-CBF) for the (safe) set $\mathcal{S}$ as defined in \eqref{eq:inv}, if 
\begin{gather} \label{eq:DT-CBF}
    \sup_{u \in U} h(f(x) + \tilde{g}(x) u)\ge 0, \qquad \forall x \in \mathcal{S}.
\end{gather}
Moreover, for any $x \in \mathcal{S}$, we define the corresponding (safe) input set
\begin{gather} \label{eq:safeinput}
    K_{\mathcal{S}}(x)=\{u\in U: h(f(x) + \tilde{g}(x) u)\ge 0\}. 
\end{gather}
\end{definition}

\begin{theorem}
Consider the discrete-time system in \eqref{eq:DT_sys} and the (safe) set $\mathcal{S}$ as defined in \eqref{eq:inv}. Then, $\mathcal{S}$ is (forward) controlled invariant if and only if there exists a DT-CBF as described in Definition \ref{def:DT-CBF}.
\end{theorem}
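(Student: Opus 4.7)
The plan is to prove the biconditional by treating the two directions separately, with the forward (necessity) direction following almost immediately from the definition of controlled invariance, and the backward (sufficiency) direction requiring an induction on the time index $k$.

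For the necessity direction, I would start by fixing an arbitrary $x \in \mathcal{S}$ and treating it as an initial state $x_0 = x$. By the assumed forward controlled invariance of $\mathcal{S}$, there exists a control sequence $\{u_k\}_{k \in \mathbb{N}} \subseteq U$ keeping the state in $\mathcal{S}$ for all $k$; in particular, $x_1 = f(x) + \tilde{g}(x)u_0 \in \mathcal{S}$, so $h(f(x) + \tilde{g}(x)u_0) \ge 0$ by \eqref{eq:inv}. Consequently, $\sup_{u \in U} h(f(x) + \tilde{g}(x)u) \ge h(f(x) + \tilde{g}(x)u_0) \ge 0$, which verifies \eqref{eq:DT-CBF} and shows that $h$ is a DT-CBF. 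Since $x \in \mathcal{S}$ was arbitrary, this direction is done.

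For the sufficiency direction, I would proceed by induction on $k$. Fix any $x_0 \in \mathcal{S}$ and assume as the inductive hypothesis that $x_k \in \mathcal{S}$. By \eqref{eq:DT-CBF}, $\sup_{u \in U} h(f(x_k) + \tilde{g}(x_k)u) \ge 0$, so the safe input set $K_{\mathcal{S}}(x_k)$ defined in \eqref{eq:safeinput} is nonempty. Selecting $u_k \in K_{\mathcal{S}}(x_k)$ and setting $x_{k+1} = f(x_k) + \tilde{g}(x_k)u_k$ yields $h(x_{k+1}) \ge 0$, i.e., $x_{k+1} \in \mathcal{S}$. Induction then produces a control sequence that keeps every $x_k$ in $\mathcal{S}$, establishing forward controlled invariance.

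The main subtlety—and the only real obstacle—is that in the sufficiency step I implicitly need $K_{\mathcal{S}}(x_k)$ to be nonempty, not merely to have supremum zero that might fail to be attained. I would handle this by noting that the strict inequality $\sup \ge 0$ combined with the closed-set definition $\mathcal{S} = \{h \ge 0\}$ is sufficient whenever the supremum is attained (e.g., $U$ compact and $h \circ f_k$ upper semicontinuous), which is the setting implicitly assumed throughout the paper. In that case $\sup$ can be replaced by $\max$ and $K_{\mathcal{S}}(x)$ is guaranteed nonempty for every $x \in \mathcal{S}$, closing the induction cleanly. If one prefers a fully general statement, the DT-CBF condition in \eqref{eq:DT-CBF} should be read as "there exists $u \in U$ such that $h(f(x) + \tilde{g}(x)u) \ge 0$," which is exactly the requirement needed for $K_{\mathcal{S}}(x) \neq \emptyset$ and matches the necessity direction above without any topological assumptions.
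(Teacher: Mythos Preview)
Your proof is correct and follows essentially the same approach as the paper: sufficiency by induction on $k$ using nonemptiness of $K_{\mathcal{S}}(x_k)$, and necessity via the one-step implication (the paper phrases this by contraposition rather than your direct argument, but the content is identical). Your discussion of the sup-attainment subtlety is in fact more careful than the paper, which glosses over this point entirely.
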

\begin{proof} 
With $x_k=x$ and $u_k=u$ for any $x\in \mathcal{S}$ and $u \in K_\mathcal{S}(x)$ at any time step $k$, the inequality in \eqref{eq:DT-CBF} is satisfied by definition, which means that from \eqref{eq:DT_sys}, we have the following: 
\begin{gather}
    \sup_{u_k \in U} h(f(x_k)+\tilde{g}(x_k)u_k)=\sup_{u_k \in U} h(x_{k+1})\ge 0. \label{eq:h_k1}
\end{gather}
In other words, $x_k \in \mathcal{S}$ implies that $x_{k+1} \in \mathcal{S}$ with $u_k \in K_\mathcal{S}(x_k)$. Further, with the base case of $x_0 \in \mathcal{S}$ (by assumption), we have an inductive proof of sufficiency of the DT-CBF for controlled invariance of $\mathcal{S}$. The necessity can be shown by contraposition. Suppose \eqref{eq:DT-CBF} does not hold, then all $u_k \in U$ for some $x_k$ lead to $h(x_{k+1})< 0$, which means that $\mathcal{S}$ is not controlled invariant.
\end{proof}

Note that our DT-CBF definition is slightly different from the ones proposed in \cite{agrawal2017discrete,ahmadi2019safe}, which have additional terms involving $h(x_k)$ when compared with \eqref{eq:h_k1}. We believe that our definition is more intuitive and straightforward since it directly imposes the controlled invariance condition without any modifications. More importantly, we can show that the (safe) input set $K_\mathcal{S}(x)$ in Definition \ref{def:DT-CBF} is a (non-strict) superset of the corresponding input sets based on the definitions in \cite{agrawal2017discrete,ahmadi2019safe}, as shown in the following proposition.

\begin{proposition}
The (safe) input set $K_\mathcal{S}(x)$ for any $x\in \mathcal{S}$ corresponding to the DT-CBF in Definition \ref{def:DT-CBF} satisfies
$$K_\mathcal{S}(x)\supseteq K'_\mathcal{S}(x) \ \text{and} \ K_\mathcal{S}(x)\supseteq K''_\mathcal{S}(x),$$
where the input sets $K'_\mathcal{S}(x)$ and $K''_\mathcal{S}(x)$ defined by
\begin{align*}
    K'_\mathcal{S}(x) &=\{u\in U: h(f(x) + \tilde{g}(x) u)+(\gamma-1)h(x)\ge 0 \},\\
    K''_\mathcal{S}(x) &=\{u\in U: h(f(x) + \tilde{g}(x) u)\hspace{-0.05cm}+\hspace{-0.05cm}\alpha(h(x))\hspace{-0.05cm}-\hspace{-0.05cm}h(x)\ge 0 \},
\end{align*}
correspond to the definitions of DT-CBF in \cite[Proposition 4]{agrawal2017discrete} and \cite[Definition 2]{ahmadi2019safe}, respectively, with $0\le \gamma \le 1$ and a class $\mathcal{K}$ function, $\alpha \in \mathcal{K}$, satisfying $\alpha(h(x))<h(x)$.
\end{proposition}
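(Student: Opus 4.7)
The plan is to prove each inclusion directly by picking an arbitrary element of the smaller set and showing it lies in $K_\mathcal{S}(x)$. Both inclusions reduce to rearranging the defining inequality of the other definitions to lower-bound $h(f(x)+\tilde g(x)u)$ by a nonnegative quantity, using the hypothesis $x \in \mathcal{S}$ (i.e., $h(x)\ge 0$).

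For the first inclusion, I would fix $x\in \mathcal{S}$ and let $u\in K'_\mathcal{S}(x)$. By definition this gives
\begin{align*}
h(f(x)+\tilde g(x)u) \ge (1-\gamma)h(x).
\end{align*}
Since $0\le \gamma \le 1$, the factor $(1-\gamma)$ is nonnegative, and by $x\in\mathcal{S}$ we have $h(x)\ge 0$. The product is therefore nonnegative, giving $h(f(x)+\tilde g(x)u)\ge 0$, i.e., $u\in K_\mathcal{S}(x)$. This establishes $K_\mathcal{S}(x)\supseteq K'_\mathcal{S}(x)$.

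For the second inclusion, I would fix $x\in\mathcal{S}$ and let $u\in K''_\mathcal{S}(x)$, which yields
\begin{align*}
h(f(x)+\tilde g(x)u) \ge h(x)-\alpha(h(x)).
\end{align*}
Here the key step is to argue that $h(x)-\alpha(h(x))\ge 0$. The condition $\alpha(h(x))<h(x)$ directly gives this for $h(x)>0$; for the boundary case $h(x)=0$, since $\alpha\in\mathcal{K}$ satisfies $\alpha(0)=0$, we get $h(x)-\alpha(h(x))=0$ as well. Either way the right-hand side is nonnegative, hence $h(f(x)+\tilde g(x)u)\ge 0$ and $u\in K_\mathcal{S}(x)$. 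This yields $K_\mathcal{S}(x)\supseteq K''_\mathcal{S}(x)$.

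There is essentially no hard step here; the result is a one-line consequence of the fact that both competing definitions add a nonnegative slack term on top of the author's condition whenever $h(x)\ge 0$. The only mild subtlety worth flagging explicitly in the write-up is the boundary behavior of $\alpha$ at the origin (so that the statement $\alpha(h(x))<h(x)$ is properly interpreted as $\alpha(h(x))\le h(x)$ on $\mathcal{S}$, with equality only on $\partial\mathcal{S}$); handling this keeps the argument fully rigorous without affecting the conclusion.
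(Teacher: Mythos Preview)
Your proposal is correct and follows essentially the same approach as the paper: rearrange each defining inequality to bound $h(f(x)+\tilde g(x)u)$ below by a nonnegative quantity using $h(x)\ge 0$, $0\le\gamma\le1$, and $\alpha(h(x))\le h(x)$. Your explicit handling of the boundary case $h(x)=0$ via $\alpha(0)=0$ is a minor clarification the paper omits, but otherwise the arguments are identical.
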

\begin{proof}
The result follows directly from the observation that 
\begin{gather*}
   u \in K'_{\mathcal{S}}(x)\Rightarrow h(f(x) + \tilde{g}(x) u)\ge (1-\gamma)h(x)\ge 0 ,\\
    u \in K''_{\mathcal{S}}(x)\Rightarrow h(f(x) + \tilde{g}(x) u)\ge h(x)-\alpha(h(x))\ge 0 ,
\end{gather*}
for all $x \in \mathcal{S}$, with the above choices of $\gamma$ and $\alpha$, as well as $h(x)\ge 0$; hence, $u \in K_\mathcal{S}$ by definition in \eqref{eq:safeinput}.
\end{proof}

This means that the DT-CBF definitions in \cite{agrawal2017discrete,ahmadi2019safe} are sufficient for controlled invariance but only necessary with the choice of $\gamma=1$ and $\alpha(h(x))=h(x)$. Further, the (safe) input set is the least restrictive when using the DT-CBF in Definition \ref{def:DT-CBF} and when incorporated into an optimal safety controller, does not lead to suboptimality. To our understanding, the extra terms in \cite{agrawal2017discrete,ahmadi2019safe} are a legacy from their continuous-time predecessors, e.g., \cite[Definition 5]{ames2016control}, where a relaxation of the invariance condition is introduced to extend the condition for only the boundary of the set $\mathcal{S}$ to the entire domain, including its interior. 
However, this is not needed for the discrete-time counterpart because the controlled invariance condition in \eqref{eq:DT-CBF} is already a necessary and sufficient condition for the entire  set $\mathcal{S}$. Nevertheless, the extra terms in the previous definitions may still be helpful when there are small modeling uncertainties.\\[-0.25cm]

\subsubsection{Tractable DT-CBF for Partially Control Affine Systems}

An important consideration when deriving a control barrier function is the tractability of the resulting controlled invariance condition in \eqref{eq:DT-CBF}. As observed in \cite{agrawal2017discrete}, unlike the continuous-time counterpart, the invariance condition when incorporated as a constraint in an optimal control problem will in general lead to nonlinear constraints and hence, the authors in \cite{agrawal2017discrete} focused only on linear systems with linear DT-CBFs. Indeed, this special case is the only one where the controlled invariance condition in \eqref{eq:DT-CBF} is affine in the control input for control affine systems in \eqref{eq:DT_sys} with $n_1=0$. 

However, when additional structure is present, i.e., when $n_1>0$ for systems with higher order dynamics, this class of partially control affine systems can also lead to controlled invariance conditions in \eqref{eq:DT-CBF} that are control affine with a careful choice of DT-CBFs, which we introduce next.

\begin{definition}[Partially Control Affine DT-CBF] \label{def:PACBF}
 For a discrete-time partially control affine system in the form of (\ref{eq:DT_sys}), the function $h_A:\mathbb{R}^n \to \mathbb{R}$ satisfying
 \begin{align} \label{eq:PACBF}
 h_A(x) = \mu(x_{1}) x_{2} + \nu(x_{1})  
 \end{align}
 is a discrete-time partially control affine control barrier function for the (safe) set $\mathcal{S}$ as defined in \eqref{eq:inv}, if
\begin{align} 
   \displaystyle \hspace{-0.2cm} &\sup_{u \in U} h_A(f(x) + \tilde{g}(x) u) \label{eq:PCA}\\
   \nonumber \displaystyle \hspace{-0.2cm} &=\sup_{u \in U} \mu(f_1(x))(f_2(x)+g(x)u) + \nu(f_1(x)) \ge 0, \forall x \in \mathcal{S},\hspace{-0.4cm}
\end{align}
where $\mu:\mathbb{R}^{n_1} \to \mathbb{R}^{1 \times n_2}$ and 
$\nu:\mathbb{R}^{n_1} \to \mathbb{R}$ 
are any nonlinear functions. 
Moreover, for any $x \in \mathcal{S}$, we define the corresponding (safe) affine input set
\begin{gather*}
    K^A_{\mathcal{S}}(x) \hspace{-0.05cm}=\hspace{-0.05cm}\{u\in U\hspace{-0.05cm}:\hspace{-0.05cm} \mu(f_1(x))(f_2(x)\hspace{-0.05cm}+\hspace{-0.05cm}g(x)u) \hspace{-0.05cm}+\hspace{-0.05cm} \nu(f_1(x))\hspace{-0.05cm}\ge \hspace{-0.05cm} 0 \}.
\end{gather*}
\end{definition}

\begin{remark}
The controlled invariance condition in \eqref{eq:PCA} is affine in the control input, as desired. Thus, when included as a tractable constraint in an optimal control problem with a quadratic cost, the result is a quadratic program (QP), similar to the continuous-time safety control approach in \cite{ames2016control}.
\end{remark}


\subsection{Compositions of Multiple and Piecewise DT-CBFs}\label{sec:compose}
Next, we develop tools for encoding Boolean compositions of multiple discrete-time control barrier functions as well as piecewise/non-smooth control barrier functions as mixed-integer constraints. 

First, we analyze three basic Boolean operations for composition of multiple control barrier functions $\{h_i(x)\}_{i\in \mathbb{Z}^+_N}$, i.e., $\neg$ (negation), $\land$ (conjunction) and $\vee$ (disjunction). 
The negation operator is trivial and can be shown by checking if $-h_i(x)$ satisfies the invariance property. Formally, we have
\begin{align}\label{eq:negation}
   \neg  h_i(x) \ge 0 \equiv 
   h_i(x) < 0 \Big. 
\end{align}
For the disjunction operator $\vee$, we can represent them as 

\vspace{-0.35cm}\begin{align} \label{disj_operator}
\displaystyle\bigvee_{i=1}^{N} h_i(x)\ge 0 \equiv 
\begin{array}{c}
\Big \{  
\forall i \in \mathbb{Z}_N^+: h_i(x) \geq s_i, \text{SOS-1}:\{ s_i, b_i\}, \\ 
b_i \in \{ 0,1\}, \sum_{i=1}^N b_i \ge 1 \Big\},\\[-0.6cm]
\end{array}
\end{align}

\noindent with $s_i$ being a slack variable, which ensures that there exists at least one $j \in \mathbb{Z}_N^+$ such that $h_j(x) \geq 0$. Moreover, for the conjunction operator $\land$, we have
\begin{align} \label{eq:conj}
\bigwedge_{i=1}^{N} h_i(x)\ge 0 \equiv \Big \{ 
\forall i \in \mathbb{Z}_N^+: h_i(x) \geq 0
\Big\},
\end{align}
which enforces that $h_j(x) \geq 0$ for all $j \in \mathbb{Z}_N^+$.

By leveraging the above three basic Boolean operations, we can further compose the following three secondary Boolean operations found in Boolean algebra: 
\begin{align}
  \hspace{-0.1cm}  & h_i(x) \rightarrow h_j(x)  \triangleq \neg h_i(x) \vee h_j(x) ,\label{eq:implies}\\
  \hspace{-0.1cm}  & h_i(x) \oplus h_j(x)  \triangleq (h_i(x) \vee  h_j(x)) \wedge \neg ( h_i(x) \wedge h_j(x)) , \hspace{-0.15cm} \\
    \hspace{-0.1cm} & h_i(x) \equiv h_j(x) \triangleq \neg (h_i(x) \oplus h_j(x)) ,
\end{align}
which represent the \textit{implication}, \textit{exclusive or} and \textit{equivalence} operations of a pair of control barrier functions  $h_i(x)$ and $h_j(x)$, respectively, where we suppressed the $\ge 0$ terms in the above for the sake of brevity and clarity.

Finally, we consider the composition of piecewise control barrier functions that enable us to represent more complicated non-convex safe sets, e.g., for the lane keeping problem in Section \ref{sec:example}. 
Specifically, for a partition of the domain $\bigcup \limits_{ j\in \mathcal{J}} {\mathcal{P}}_{j}$ (cf. Definition \ref{def:partition}), where each subregion is represented by the inequality $p_j(x)\le 0$, the partition/mode-dependent control barrier function can be expressed by a \emph{if-else} statement in the form of
\begin{align}
h_j(x_k) \ge 0 \text{ if } p_j(x)  \le 0,
\end{align}
that can be written using the \emph{implication} operator as 
\begin{align} \label{PLBF}
   p_j(x) \le 0 \rightarrow h_j(x) \Leftrightarrow \neg(p_j(x) \le 0) \vee h_j(x).
\end{align}
Then, with the negation and disjunction operators defined in \eqref{eq:negation} and \eqref{disj_operator}, we can encode \eqref{PLBF} as mixed-integer constraints.


Similar to the discussion above on tractability of the controlled invariance condition when added as a constraint in an optimal control problem, we will define a piecewise DT-CBF for partially control affine systems in \eqref{eq:DT_sys} lead to mixed-integer linear constraints, as follows:


\begin{definition}[Piecewise Partially Control Affine DT-CBF] \label{def:PPACBF}
 For a discrete-time piecewise partially control affine system in the form of (\ref{eq:DT_sys}), the piecewise function $h_P:\mathbb{R}^n \to \mathbb{R}$ satisfying
 \begin{align} \label{eq:PPACBF}
 h_{P}(x) \hspace{-0.05cm}=\hspace{-0.05cm} \{\mu_j(x_{1}) x_{2} \hspace{-0.05cm}+\hspace{-0.05cm} \nu_j(x_{1}) \ \text{if} \ \kappa_j(x_{1}) x_{2} \hspace{-0.05cm}+\hspace{-0.05cm} \lambda_j(x_{1}) \hspace{-0.05cm} \le \hspace{-0.05cm} 0\}_{j=1}^{|\mathcal{J}|}
 \end{align}
 is a discrete-time partially control affine control barrier function for the (safe) set $\mathcal{S}=\bigcup \limits_{ j\in \mathcal{J}}\{ {\kappa_j(x_{1}) x_{2} + \lambda_j(x_{1})\le 0}\}_{j}$ as defined in \eqref{eq:inv}, if
\begin{align} 
   \displaystyle \sup_{u \in U} h_{P}(f(x) + \tilde{g}(x) u) \ge 0, \quad \forall x \in \mathcal{S},  
\end{align}
and equivalently, for all $j \in \mathbb{Z}^+_{|\mathcal{J}|}$,
\begin{align}\label{eq:PDT-CBF}
\begin{array}{r}
    \sup\limits_{u \in U} \mu_j(f_1(x))(f_2(x)+g(x)u) + \nu_j(f_1(x)) \ge 0,  \\
    \text{if} \  \kappa_j(f_1(x))(f_2(x)+g(x)u) + \lambda_j(f_1(x))\le 0,
    \end{array}
\end{align}
where $\mu_j,\kappa_j:\mathbb{R}^{n_1} \to \mathbb{R}^{1 \times n_2}$ and 
$\nu_j,\lambda_j:\mathbb{R}^{n_1} \to \mathbb{R}$ 
are any nonlinear functions. 
Moreover, we define the corresponding (safe) piecewise affine input set
\begin{gather*}
\begin{array}{r}
    K^{P}_{\mathcal{S}}(x) \hspace{-0.1cm}=\hspace{-0.1cm} \{u\in U\hspace{-0.05cm}:\hspace{-0.05cm} \mu_j(f_1(x))(f_2(x)\hspace{-0.1cm}+\hspace{-0.1cm}g(x)u) \hspace{-0.1cm}+\hspace{-0.1cm} \nu_j(f_1(x))\hspace{-0.1cm}\ge \hspace{-0.1cm} 0\\
    \hspace{0.45cm} \text{if} \ \kappa_j(f_1(x))(f_2(x)\hspace{-0.05cm}+\hspace{-0.05cm}g(x)u) \hspace{-0.05cm}+\hspace{-0.05cm} \lambda_j(f_1(x))\hspace{-0.05cm}\le \hspace{-0.05cm} 0, \ j \in \mathcal{J} \}.
\end{array}
\end{gather*}
\end{definition}

\begin{remark}\label{rem:MIL}
The controlled invariance condition in \eqref{eq:PDT-CBF} is piecewise control affine and hence, when incorporated as a  constraint in an optimal control formulation with a quadratic cost, the result is a mixed-integer quadratic program (MIQP). Similar results can also be derived in a straightforward manner when the system dynamics are switched among a set of partially control affine dynamics, and thus, a detailed description is omitted for the sake of brevity.
\end{remark}



\section{Application to Lane Keeping and Obstacle Avoidance} \label{sec:example}

In this section, we apply the proposed DT-CBF tools to two automotive safety applications, namely Lane Keeping (LK) and Obstacle Avoidance (OA).

The goal of the Lane Keeping (LK) problem is to keep a vehicle in the middle of a desired lane that may be curved by controlling the vehicle's lateral displacement. The simulation example conveyed in this work was largely inspired by the LK example in \cite{ames2016control},
where the authors developed a continuous-time CBF based approach to solve this problem. By contrast, we consider the development of a discrete-time CBF approach and show that the resulting optimal control problem is ``harder" in that we now have a mixed-integer quadratic program (MIQP) as opposed to a quadratic program (QP) in \cite{ames2016control}. Nonetheless, we believe that this discrete-time implementation is important since almost all current controllers on smart and autonomous systems, including vehicles, are digital. 

Next, the LK capability is extended to allow the vehicle to avoid an obstacle in the road lane by using the compositions described in Section \ref{sec:compose} to choose to either drive around the obstacle to the left, or to the right. 

\subsection{Lane Keeping Setup}


Similar to \cite{ames2016control}, we consider a time-discretized version of the vehicle model 
in \cite{rossetter2006lyapunov} (using the forward Euler method with sampling time $t_s$):
\begin{equation}
    x'_{k+1} = (I+At_s) x'_k + B t_s u_k + E t_s r_{d,k} ,
    \label{eq:model}
\end{equation}
where
\begin{equation*}
    \footnotesize{A \hspace{-0.05cm}=\hspace{-0.05cm} \begin{bmatrix} 0 & 1 & V_0 & 0 \\ 0 & -\frac{C_f + C_r}{M V_0} & 0 & \frac{b C_r - a C_f}{M V_0} \hspace{-2.5pt} - \hspace{-2.5pt} V_0 \\ 0 & 0 & 0 & 1 \\ 0 & \frac{b C_r - a C_f}{I_z V_0} & 0 & -\frac{a^2C_f + b^2C_r}{I_z V_0} \end{bmatrix}\hspace{-0.1cm}, B \hspace{-0.05cm}=\hspace{-0.05cm} \begin{bmatrix} 0 \\ \frac{C_f}{M} \\ 0 \\ a\frac{C_f}{I_z} \end{bmatrix}\hspace{-0.1cm}, E \hspace{-0.05cm}=\hspace{-0.05cm} \begin{bmatrix} 0 \\ 0 \\ -1 \\ 0 \end{bmatrix}}\hspace{-0.05cm}.
\end{equation*}

The states $x'_k = \begin{bmatrix} y_k & \nu_k & \psi_k & r_k \end{bmatrix}^\top$ are the lateral displacement of the car from the center of the lane ($y_k$), the car's lateral velocity ($\nu_k$), the yaw angle of the car with respect to the lane center ($\psi_k$), and the yaw rate of the car ($r_k$). The input $u_k$ is the angle of the front tires at the current time step $k$. Road curvature is modeled as a \emph{known} disturbance to the system, 
and the road curves at a rate of $r_{d,k} = \frac{V_0}{R_k}$ where $V_0$ is the longitudinal velocity of the vehicle and $R_k$ is the radius of curvature of the road at time step $k$. The parameters $M$ and $I_z$ are the vehicle mass and moment of inertia about the center of mass, respectively, $a$ and $b$ are the distance from the center of mass to the front and rear tires, respectively, and $C_f$ and $C_r$ are tire parameters.

First, we put the system in \eqref{eq:model} into the partially control affine form in \eqref{eq:DT_sys} with a reduced state $x_k=\begin{bmatrix} y_k & \nu_k \end{bmatrix}^\top$ with $\psi_k$ and $r_k$ as known/measured parameters, and with
\begin{align*}
\begin{array}{c}
    g(x_k)=\frac{C_f}{M},\ f_1(x_k)=\begin{bmatrix} 1 & t_s \end{bmatrix}x_k +t_s V_0 \psi_k,\\
    f_2(x_k)=\begin{bmatrix} 1 & -t_s\frac{C_f + C_r}{M V_0}  \end{bmatrix}x_k + t_s \frac{b C_r - a C_f}{M V_0} r_k.
    \end{array}
\end{align*}

We consider two constraints 
in the LK problem: 

\subsubsection{\textbf{Acceleration Constraint}} The first constraint is to prevent unbounded lateral acceleration $a_k$ of the car:
\begin{equation} \label{eq:acc_c}
    |a_k|=\left| \frac{v_{k+1} - v_k}{t_s} \right| \leq a_{max}, 
     \ \forall k \in \mathbb{N},
\end{equation}
where $v_k = \frac{y_{k+1}-y_k}{t_s}$ is the instantaneous lateral velocity. 
From \eqref{eq:model}, 
$\frac{v_{k+1}-v_k}{t_s}= \frac{y_{k+2}-2y_{k+1}+y_k}{t_s^2} $ can be simplified to
\begin{align*}
\begin{array}{rl}
    \frac{v_{k+1}-v_k}{t_s}&= -\frac{C_f + C_r}{MV_0} \nu_k + \frac{bC_r - aC_f}{MV_0}r_k - V_0r_{d,k} + \frac{C_f}{M}u_k \\
    &= -\frac{1}{M}F_0 + \frac{C_f}{M} u_k ,
        \end{array}
\end{align*}
where
    $F_0 = C_f \frac{\nu_k + a r_k}{V_0} + C_r \frac{\nu_k - b r_k}{V_0} + M V_0 r_{d,k}$.
The constraint \eqref{eq:acc_c} can then be written as
\begin{equation}
    \begin{bmatrix} 1 \\ -1 \end{bmatrix} u_k \leq \begin{bmatrix} \frac{1}{C_f} \left( M a_{max} + F_0 \right) \\ \frac{1}{C_f} \left( M a_{max} - F_0 \right) \end{bmatrix} . \tag{AC}
    \label{eq:AC}
\end{equation}

\subsubsection{\textbf{Lane Centering Constraint}} This second constraint keeps the car from drifting too far away from the middle of the lane, where it could possibly drift out of it. This can be done by restricting the maximum lateral displacement:
\begin{equation}
    |y_k| \leq y_{max}, \ \forall k \in \mathbb{N}. 
    \label{eq:LCC}
\end{equation}
As described in \cite{ames2016control}, a typical United States lane is 12 feet wide while a car is about 6 feet wide, so the maximum lateral displacement the car can safely experience is 3 feet to either side, so $y_{max} = 3 \text{ feet} \approx 0.9 \text{ meters}$.

The next proposition proposes a DT-CBF that can enforce the controlled invariance of the lane centering constraint as a safe set, i.e., $\mathcal{S}_{LK}=\{x\in \mathbb{R}^2: \eqref{eq:LCC} \ \text{holds}\}$, subject to the acceleration input constraint, i.e., $U=\{u \in \mathbb{R}:\eqref{eq:acc_c} \ \text{holds}\}$. 

\begin{proposition}
The function $h_{LK}:\mathbb{R}^2\to \mathbb{R}$ 
\begin{equation}
\begin{array}{rl}
    h_{LK}(x) = &\sqrt{2a_{max}(y_{max}-\sgn(v)y)+\frac{1}{4}a_{max}^2 t_s^2}\\ &- |v| - \frac{1}{2}a_{max}t_s,
\end{array} \label{eq:h_lk}
\end{equation}
where $v=\nu+V_0 \psi$ is the instantaneous lateral velocity, 
is a valid DT-CBF for the (safe) set $\mathcal{S}_{LK}=\{x\in \mathbb{R}^2: \eqref{eq:LCC} \ \text{holds}\}$. Moreover, 
the corresponding (safe) input set $K_{\mathcal{S}_{LK}}(x)$ (cf. Definition \ref{def:PPACBF}) for any $x \in \mathcal{S}$ can be implemented as mixed-integer linear constraints.
\end{proposition}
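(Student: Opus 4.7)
The plan is to leverage a key structural property of \eqref{eq:model}: in its partially control-affine form, $y_{k+1} = f_1(x_k) = y_k + t_s(\nu_k + V_0 \psi_k) = y_k + t_s v_k$ is independent of $u_k$, while the lateral velocity $v_{k+1} = \nu_{k+1} + V_0 \psi_{k+1}$ is affine in $u_k$ through $\nu_{k+1} = f_2(x_k) + g(x_k) u_k$. This decoupling, combined with the physical reading of $h_{LK}$ as a discrete-time braking-distance margin, is what makes both the invariance argument and its mixed-integer reformulation tractable.

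For the DT-CBF property, the plan is to first rewrite the condition $h_{LK}(x) \geq 0$ as $\sqrt{2a_{max}(y_{max}-\sgn(v)y)+\tfrac14 a_{max}^2 t_s^2} \geq |v| + \tfrac12 a_{max}t_s$, whose two sides are both non-negative, and square to cancel the common $\tfrac14 a_{max}^2 t_s^2$; this leaves the discrete-time braking inequality $2a_{max}(y_{max}-\sgn(v)y) \geq v^2 + a_{max}t_s|v|$, which expresses that the vehicle can be decelerated at the maximum rate $a_{max}$ to rest within the remaining signed distance $y_{max} - \sgn(v) y$ to the lane edge in the direction of motion, and in particular shows $\{h_{LK}\geq 0\} \subseteq \mathcal{S}_{LK}$. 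To verify \eqref{eq:DT-CBF}, I would take as a witness the full-braking control, namely the $u_k \in U$ producing $a_k = -\sgn(v_k) a_{max}$, which is admissible because \eqref{eq:AC} is exactly the translation of $|a_k|\leq a_{max}$ into $u_k$-space. In the regime $|v_k| \geq a_{max}t_s$, the signs of $v_k$ and $v_{k+1} = v_k - \sgn(v_k) a_{max}t_s$ agree, and substituting $y_{k+1} = y_k + t_s v_k$ and $v_{k+1}$ into the squared CBF inequality at $x_{k+1}$ telescopes, after cancellation, to exactly the squared inequality at $x_k$, closing the inductive step.

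For the mixed-integer linear encoding of $K_{\mathcal{S}_{LK}}(x)$, I would exploit that the square-root factor $\sqrt{2a_{max}(y_{max} - \sgn(v_{k+1}) y_{k+1}) + \tfrac14 a_{max}^2 t_s^2}$ depends on $u_k$ only through $\sgn(v_{k+1})$, since $y_{k+1}$ is known from $x_k$; it therefore takes only one of two numerical values, both computable offline from $x_k$. Introducing a binary variable $b \in \{0,1\}$ with $b=1 \Leftrightarrow v_{k+1} \geq 0$ and applying the implication and SOS-1 encodings from Section~\ref{sec:compose} (in the manner of Definition~\ref{def:PPACBF}) to glue the two branches together, the requirement $h_{LK}(x_{k+1}) \geq 0$ reduces in each branch to a single linear inequality in the scalar $v_{k+1}$ (hence linear in $u_k$), together with sign-selection constraints relating $b$ and $v_{k+1}$. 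The main obstacle I anticipate is the non-smoothness of $h_{LK}$ at $v=0$: one must fix a convention for $\sgn(0)$ that is consistent across both the invariance argument and the MILP branches, and the sign-flipping regime $|v_k|<a_{max}t_s$ in the DT-CBF proof requires picking a sub-maximal deceleration (for example, $a_k$ chosen so that $v_{k+1}=0$, which is feasible since $|v_k|/t_s < a_{max}$) and verifying $h_{LK}(x_{k+1}) \geq 0$ via case-specific algebra using the bound $|y_k| \leq y_{max}$.
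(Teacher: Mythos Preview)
Your proposal is correct and follows essentially the same route as the paper: both reduce $h_{LK}\ge 0$ to the discrete braking inequality $v^2+a_{max}t_s|v|\le 2a_{max}(y_{max}-\sgn(v)y)$ by squaring, and both exploit the key structural fact that $y_{k+1}=f_1(x_k)$ is input-independent so the square-root term in $h_{LK}(x_{k+1})$ takes one of two known numerical values indexed by $\sgn(v_{k+1})$, yielding two branches affine in $u_k$ that are glued together with SOS-1 slack variables exactly as in Definition~\ref{def:PPACBF} and \eqref{PLBF}. Your explicit telescoping verification with the full-braking witness, together with the separate treatment of the sign-flip regime $|v_k|<a_{max}t_s$, is in fact more careful than the paper, which dispatches the controlled-invariance step only informally via the physical braking-distance interpretation.
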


\begin{proof} First, we construct the safe set $\mathcal{S}$ by showing that $h_{LK}(x)\ge 0$ is equivalent to \eqref{eq:LCC}. 
For any (initial) displacement $y$ and instantaneous velocity $v$, with the maximum allowable acceleration/deceleration given $a = -\sgn(v)a_{max}$ (cf. \eqref{eq:acc_c})
it takes time $T = \frac{|v|}{a_{max}t_s}$ to reach $v_T = 0$. Correspondingly, the furthest lateral displacement with maximum acceleration/deceleration to come to a full stop is given by
\begin{align*}
\begin{array}{rl}
    y_T &= y + v t_s T - \frac{1}{2} \sgn(v)t_s^2 a_{max}(T^2 - T) \\
    &= y + \frac{1}{2} \left( \frac{v|v|}{a_{max}} \right) + \frac{1}{2} v t_s .
\end{array}
\end{align*}
Taking the travel direction into consideration using $\sgn(v)$, we can then impose the lane centering constraint in \eqref{eq:LCC} as:
\begin{align*}
\begin{array}{c} 
    \sgn(v)y_T = \sgn(v)y + \frac{v^2}{2 a_{max}} + \frac{1}{2}|v| t_s \leq y_{max}  \\
    \Leftrightarrow v^2 + |v|a_{max} t_s \leq 2 a_{max}(y_{max} - \sgn(v)y) .
    \end{array}
\end{align*}
Completing the square yields
\begin{equation*}
    \left( \hspace{-1pt} |v| \hspace{-1pt} + \hspace{-1pt} \frac{1}{2}a_{max}t_s \hspace{-1pt} \right)^2 \hspace{-4pt}- \frac{1}{4}a_{max}^2 t_s^2 \hspace{-1pt} \leq \hspace{-1pt} 2 a_{max}(y_{max} - \sgn(v)y),
\end{equation*}
and considering its square root leads to our choice of $h_{LK}(x)$ in \eqref{eq:h_lk}. Intuitively, this $h_{LK}(x)\ge 0$ ensures that for any state $x$, there is enough time in the future to come to a complete stop (and switch direction) before reaching the lane boundary. Since the system states are continuous, this includes the case that the lateral displacement at the next time step starting at $y$ with velocity $v$ does not violate the lane centering constraint; thus, the controlled invariance condition in \eqref{eq:DT-CBF} holds and $h_{LK}$ is a DT-CBF for $\mathcal{S}_{LK}$.

Next, we show that $K_{\mathcal{S}_{LK}}$ can be expressed as mixed-integer linear constraints using the composition tools for piecewise functions (as discussed in Remark \ref{rem:MIL}).
Now, for $x_k=\begin{bmatrix}y_k & \nu_k \end{bmatrix}^\top$ and $v_k=\frac{y_{k+1}-y_k}{t_s}$ with $y_{k+1}=y_k+t_s (\nu_k+V_0 \psi_k)$ and the following definition
\begin{equation*}
    \eta_k^{\pm} \triangleq \sqrt{2 a_{max} (y_{max} \mp y_{k+1}) + \frac{1}{4}a_{max}^2 t_s^2} - \frac{1}{2}a_{max} t_s,
\end{equation*}
the controlled invariance condition $h_{LK}(x_{k+1}) \geq 0$ can be written as a piecewise condition
\begin{equation}
\begin{array}{c}
  \displaystyle  \eta_k^+ + \frac{y_{k+1}}{t_s} - \frac{y_{k+2}}{t_s} , \text{ if } v_{k+1} \geq 0 , \\
  \displaystyle  \eta_k^- - \frac{y_{k+1}}{t_s} + \frac{y_{k+2}}{t_s} , \text{ if } v_{k+1} < 0 ,
    \label{eq:PW}
\end{array}
\end{equation}
where $y_{k+2} = y_{k+1} + t_s z_k + t_s^2 \frac{C_f}{M}u_k$ with $z_k \triangleq V_0\psi_{k+1} + (1 + t_s \alpha)\nu_k + t_s \beta r_k$, $\alpha = -\frac{C_f + C_r}{MV_0}$, $\beta = \frac{bC_r - aC_f}{MV_0} - V_0$ and $\psi_{k+1}=\psi_k+t_s (r_k-r_{d,k})$.
Then, using the composition tool for piecewise functions in \eqref{PLBF}, the piecewise condition in (\ref{eq:PW}) can be rewritten as follows:
\begin{equation}
\tag{LC-CBF}
\begin{array}{rl}
    \eta_k^+ - z_k - t_s \frac{C_f}{M} u_k + s_1 \geq 0 ,& \
    z_k + t_s \frac{C_f}{M} u_k + s_1 \geq 0, \\
    \eta_k^- + z_k + t_s \frac{C_f}{M} u_k + s_2 \geq 0 , &\
    -z_k - t_s \frac{C_f}{M} u_k + s_2 > 0 , \\
    \text{SOS-1}:\left\{ s_1,s_2 \right\} ,& \ s_1,s_2 \geq 0,
    \label{eq:LC-CBF}
\end{array}
\end{equation}
which are mixed-integer linear constraints on $u_k$.
\end{proof}

It is noteworthy that in the limit when the sampling time $t_s$ tends to zero, our $h_{LK}(x)$ in \eqref{eq:h_lk} becomes the continuous-time CBF in \cite[Eq. (53)]{ames2016control}.

Next, we adopt the optimal control framework with a quadratic cost in \cite{ames2016control} to select the optimal input from the (safe) input set $K_{\mathcal{S}_{LK}}$, as follows:\\[-0.25cm]

\noindent{\textbf{Mixed-Integer Quadratic Program for LK:}} The DT-CBF is combined with a linear feedback controller $u_k = -K(x_k - x_{ff,k})$, where $K$ is a (legacy) controller gain and $x_{ff,k} = \begin{bmatrix} 0 & 0 & 0 & r_{d,k} \end{bmatrix}^\top$, as well as the acceleration and lane centering constraints, \eqref{eq:acc_c} and \eqref{eq:LCC}, respectively, resulting in the following 
mixed-integer quadratic program: 
\begin{equation}
\begin{aligned}
&\mathbf{u_k^*} = && \argmin_{\mathbf{u_k}=[u_k,\delta]^\top} \quad \frac{1}{2} \mathbf{u_k}^\top H \mathbf{u_k} + F^\top \mathbf{u_k} \\
& \text{s.t.} && (\text{\ref{eq:AC}}) \ \text{and} \ (\text{\ref{eq:LC-CBF}}) \ \text{hold}, \\
& && u_k = -K(x_k - x_{ff,k}) + \delta , 
\label{eq:MIQP}
\end{aligned}
\end{equation}
where $\delta$ is a relaxation variable such that the linear feedback controller forms a soft constraint that is only achieved if the required (safety) constraints are not violated, $H \in \mathbb{R}^{2 \times 2}$ is positive definite, and $F \in \mathbb{R}^{2}$.

\subsection{Obstacle Avoidance Setup}

Next, we consider the Obstacle Avoidance (OA) problem as an extension to the LK problem, where in the event that there is an obstacle in the road lane, the vehicle avoiding the obstacle to the left or right can be modeled by a LK problem in which the lane splits into two lanes going around the obstacle on either side, one with a curve rate of $r_{d_1,k}$ and another with a curve rate of $r_{d_2,k}$. 
Obviously, the vehicle cannot remain in both lanes as they split around the obstacle and we encode the choice between the left and right lanes using a conjunction (`OR' or $\vee$) of two barrier functions for each lane with $h_{LK,l}$ and $h_{LK,r}$, i.e., with $(h_{LK,l}\ge 0) \vee (h_{LK,r} \ge 0)$.\\[-0.25cm]

\noindent{\textbf{Mixed-Integer Quadratic Program for OA:}} When incorporated into an optimal control framework as in \eqref{eq:MIQP}, we obtain another mixed-integer quadratic program by virtue of the composition tools we developed in Section \ref{sec:compose}: 
\begin{equation}
\begin{aligned}
&\mathbf{u_k^*} = && \argmin_{\mathbf{u_k}=[u_k,\delta]^\top} \quad \frac{1}{2} \mathbf{u_k}^\top H \mathbf{u_k} + F^\top \mathbf{u_k} \\
& \text{s.t.} && ((\text{\ref{eq:AC}}_l) \wedge (\text{\ref{eq:LC-CBF}}_l)) \vee ((\text{\ref{eq:AC}}_r) \wedge (\text{\ref{eq:LC-CBF}}_r)) , \\
& && u_k = -K(x_k - x_{ff,k}) + \delta ,\\[-0.75cm] 
\label{eq:MIQP2}
\end{aligned}
\end{equation}
where $(\text{\ref{eq:AC}}_l)$ and $(\text{\ref{eq:LC-CBF}}_l)$ are $(\text{\ref{eq:AC}})$ and $(\text{\ref{eq:LC-CBF}})$ based on $r_{d_1,k}$, while $(\text{\ref{eq:AC}}_r)$ and $(\text{\ref{eq:LC-CBF}}_r)$ are based on $r_{d_2,k}$.

\subsection{Simulation Results}

Table \ref{tab:sim_vals} shows the values of the parameters used in the simulations of both the LK and OA problems.

\begin{table}[h!]
\centering
\caption{Parameter Values Used in Simulations}
\begin{tabular}{|c|c|c|c|}
    \hline
    $V_0$ & 8.33 m/s & $C_f$ & 133000 N/rad \\ \hline
    $C_r$ & 98800 N/rad & M & 1650 kg \\ \hline
    a & 1.11 m & b & 1.59 m \\ \hline
    $I_z$ & 2315.3 $m^2 kg$ & g & 9.81 $m/s^2$ \\ \hline
    $a_{max}$ & $0.3 \times g$ $m/s^2$ & $t_s$ & 0.01 s \\ \hline
\end{tabular}
\label{tab:sim_vals}
\end{table}\vspace{0.15cm}

The feedback gain $K$ was determined using MATLAB's place command to place the poles at $\{0.95,0.8,0.85,0.9\}$. 

\subsubsection{LK Problem} First, we demonstrate the effectiveness of our DT-CBF approach for the LK problem and compare it with the continuous-time approach in \cite{ames2016control}. As shown in Figure \ref{figureP}, with the initial state set to $x_0 = \begin{bmatrix} 
0.5 & 0 & 0 & 0 \end{bmatrix}^\top$, the lateral displacement and acceleration for both DT-CBF and CT-CBF stay within the desired bounds of $\pm 0.9 m$ and $\pm 0.3g$, respectively, but their behaviors are rather different. The lateral acceleration with the DT-CBF is less smooth presumably because of the non-smooth piecewise barrier function, but the lateral displacement remains much closer to zero for the duration of the simulation, meaning the vehicle stays closer to the center of the lane, as desired. On the other hand, the vehicle drifts up to approximately $0.4$ meters from the center of the lane once the road starts to curve at $t = 10$ seconds with the CT-CBF. Moreover, 
since the control input is proportional to the lateral acceleration, it seems to suggest that smaller inputs are needed in the long run when using the DT-CBF.

\begin{figure}[t] 
\centering
\includegraphics[width=0.2495\textwidth,trim=5mm 0mm 4mm 0mm,clip]{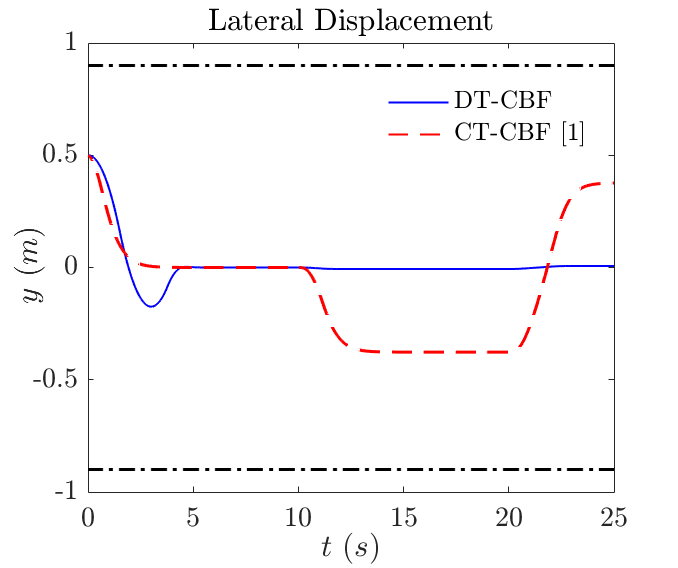}~\hspace{-0.1cm}
\includegraphics[width=0.2495\textwidth,trim=4mm 0mm 4mm 0mm,clip]{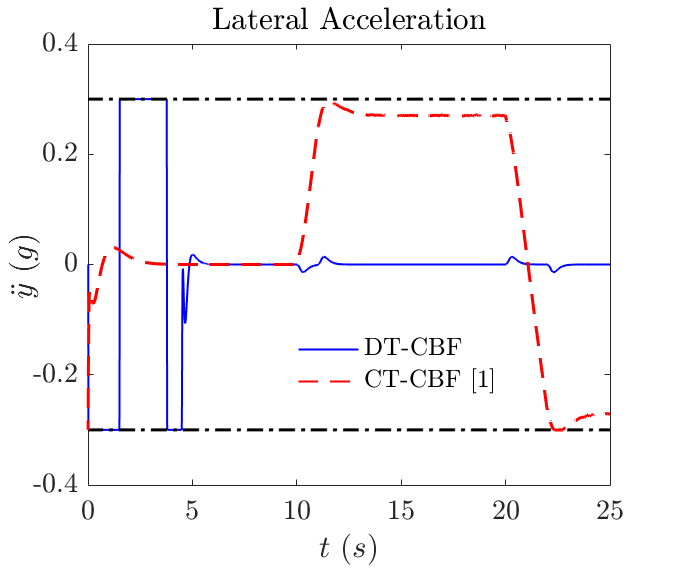}
\caption{Comparison between the proposed DT-CBF approach (blue solid lines) and the CT-CBF in \cite{ames2016control} (red dashed lines).}%
\label{figureP}\vspace{-0.2cm}
\end{figure}

\subsubsection{OA Problem}
An example scenario for the obstacle avoidance problem 
is while driving down a road and noticing an obstacle up ahead where the vehicle either needs to go around the obstacle to the left, or to the right.  As opposed to a vehicle following a curved road and staying within a safe distance of the road center, the road is simulated to curve in two opposite directions $r_{d_1,k}$ and $r_{d_2,k} = -r_{d_1,k}$ 
and the vehicle can choose whether to avoid the obstacle by driving around it to the left or the right (cf. Figure \ref{fig:right}).

\begin{figure}[t]
    \centering \vspace{0.15cm}
    \includegraphics[width = 0.415\textwidth]{./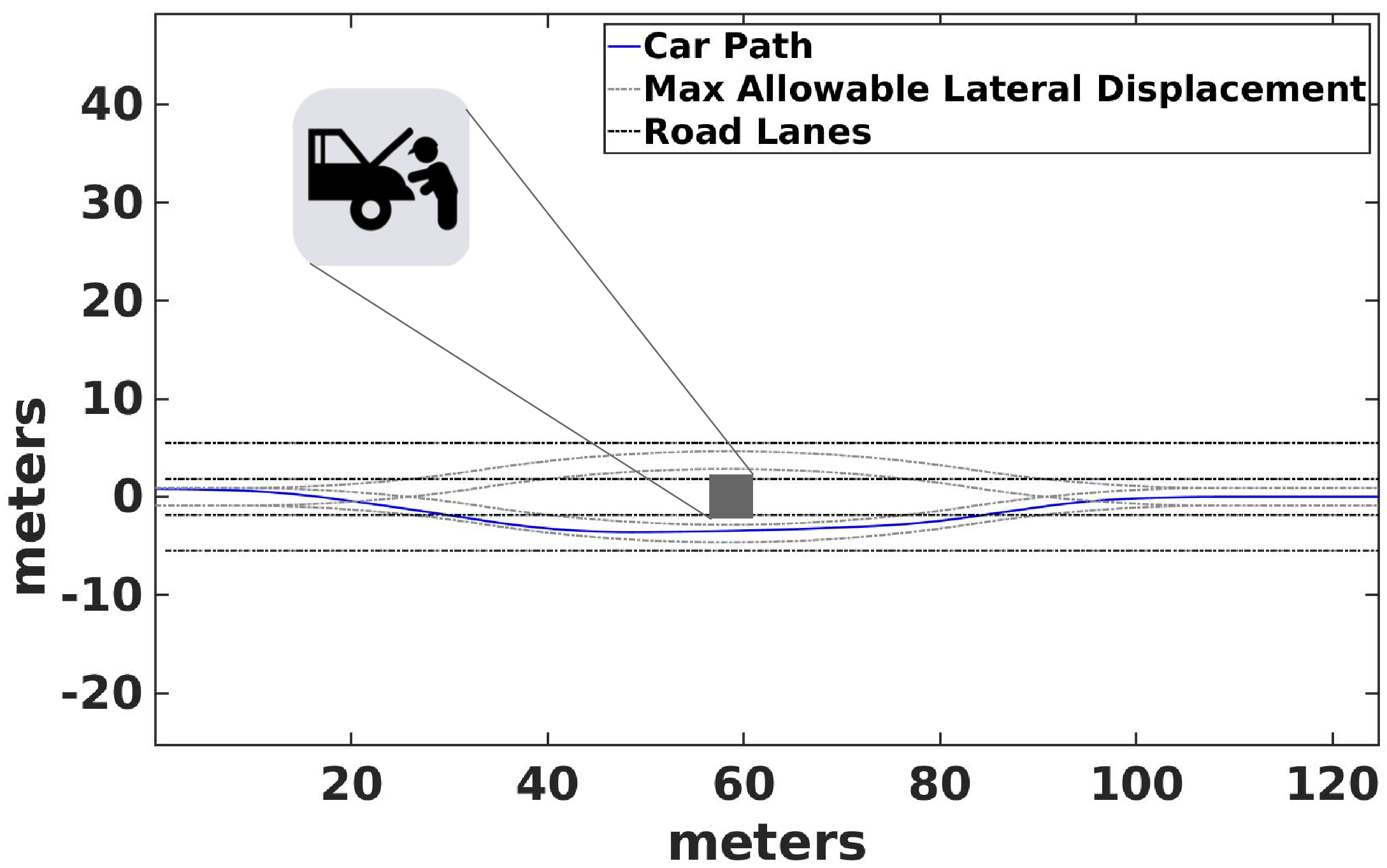}
    \caption{The vehicle must go around the obstacle to the left or to the right. This is represented by two new lanes to follow. The vehicle in this situation 
    chooses to follow the right lane. 
    }
    \label{fig:right}
\end{figure}

\begin{figure*}[!t]
    \centering
    \includegraphics[width = 0.333\textwidth, trim=5mm 0mm 5mm 0mm,clip]{./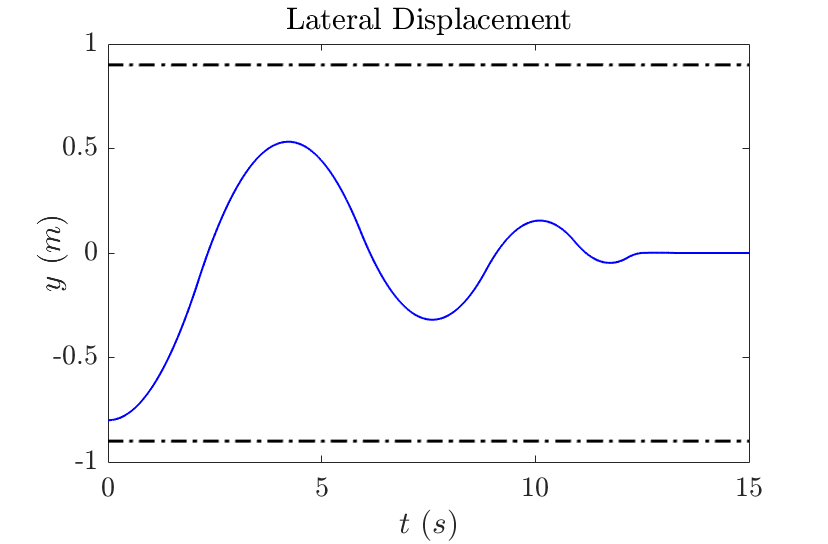}\includegraphics[width = 0.333\textwidth, trim=5mm 0mm 5mm 0mm,clip]{./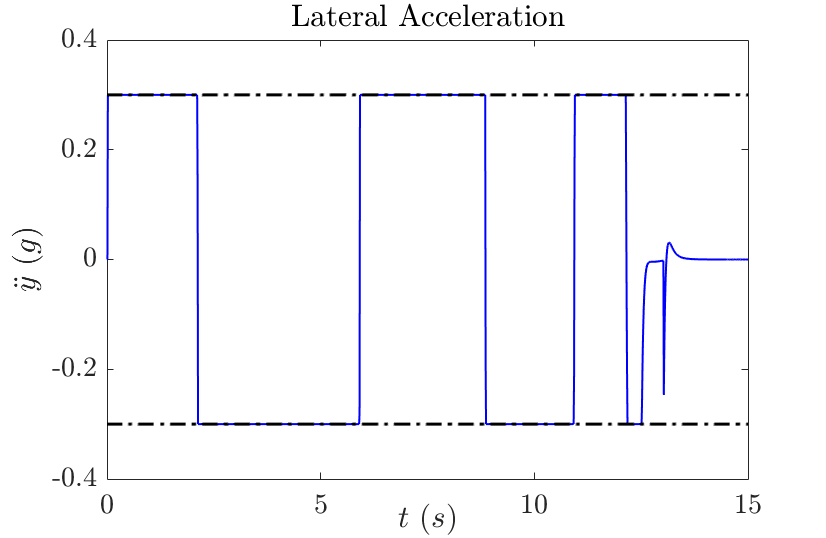}\includegraphics[width = 0.333\textwidth, trim=5mm 0mm 5mm 0mm,clip]{./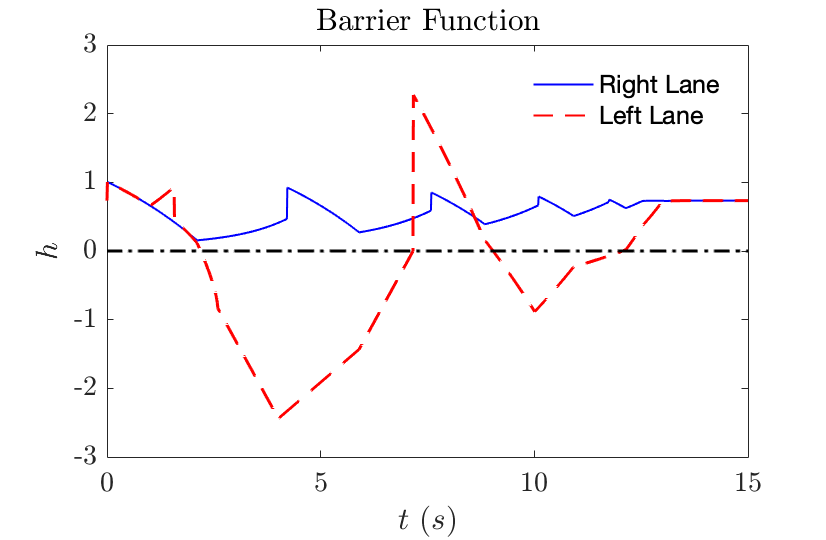}
    \caption{The lateral displacement (left) of the car is bounded by 0.9 meters from the chosen lane. The lateral acceleration (middle) is bounded by $0.3g$. The barrier function (right) that was chosen 
    in the conjunction (`OR') condition stays positive throughout the simulation (blue solid line), while the other does not (red dashed line).}
    \label{fig:3plots_right}\vspace{-0.45cm}
\end{figure*}

To simulate this OA problem, we implemented the mixed-integer quadratic program in 
\eqref{eq:MIQP2} with the initial condition set to $x_0 = \begin{bmatrix} 
-0.8 & 0 & 0 & 0 \end{bmatrix}^\top$, and the results are shown in Figures \ref{fig:right} and \ref{fig:3plots_right}, where
the lateral displacement and lateral velocity remained within the desired constraints, as expected. Moreover, for the chosen lane (to the right in this case), 
the control barrier function $h$ for that lane (cf. Figure \ref{fig:3plots_right}, right, blue solid line) remained positive, but that was not true for the other barrier function (red dashed line). 

From running several simulations, it appears that the vehicle decides to continue accelerating in whatever lateral direction it is already accelerating in. The lane split in all simulations occurred at $t = 1$ second. In Fig \ref{fig:3plots_right} at $t = 1$ second the car has a lateral acceleration of approximately $0.3g$ which indicates accelerating to the right. Therefore, the car chooses to follow the right lane around the obstacle. Conversely, if the car had a negative lateral acceleration, e.g., with $y_0=0.8$ m, it would choose to follow the left lane. 



\section{Conclusion} \label{sec:conclude}

This paper presented a novel formulation for control barrier functions for ensuring the safety of discrete-time systems. This formulation was shown to be necessary and sufficient for controlled invariance and less restrictive than existing formulations. 
In addition, we proposed nonlinear discrete-time control barrier functions for partially control affine systems, whose controlled invariance conditions are affinely affected by the control input, which meant that they can be included as tractable constraints in safety optimal control problems for a broader range of applications and safety conditions than the state-of-the-art. Furthermore, we derived mixed-integer formulations for Boolean compositions of multiple control barrier functions as well as for piecewise control barrier functions. 
Finally, these new sets of 
discrete-time control barrier function tools were applied and tested in simulation for lane keeping and obstacle avoidance. 

\bibliographystyle{unsrt}
\bibliography{biblio}
\end{document}